\newtheorem{theorem}{Theorem}
\newtheorem{definition}{Definition}
\newtheorem{condition}{Condition}
\newcommand{\kw}[1]{\medskip\par\textbf{#1:}~}
\newcommand{\kwe}{\kw{Key words}}
\newcommand{\up}[1][i]{\ensuremath^{(#1)}}
\newcommand{\F}{{\cal F}}
\newcommand{\Fq}{\F_q}
\newcommand{\cc}{ {\ensuremath{ {\cal C}} } }
\newcommand{\Ri}{R_I}              
\newcommand{\nw}{n_w}            
\newcommand{\rand}{\ensuremath{\mathrm{rand}}} 
\newcommand{\Smes}{{\cal U}}  
\title{Identification based on random coding\thanks{This is the English translation of the paper \cite{SD22} published  in ``Proceedings of Moscow Institute for Physics and Technology'' in Russian.}}
\author{V.\,R. Sidorenko$^{1,2}$, C. Deppe$^{1}$}
\date{$^1$Technical University of Munich\\
 $^2$Institute for Information Transmission Problems (Kharkevich Institute) RAS}
\begin{document}
\maketitle
\begin{center}
 \emph{Dedicated to the memory of Ernst M. Gabidulin (1937 --2021)}   
\end{center}

\medskip
\begin{abstract}
	Ahlswede and Dueck showed possibility to identify with high probability one out of $M$ messages by transmitting $1/C\log\log M$ bits only, where $C$ is the channel capacity. It is known that this  identification can be based on error-correcting codes. We propose an identification procedure based on random codes that  achieves  channel capacity. Then we show that this procedure can be simplified using pseudo-random generators.  
 
\kwe identification, random codes, pseudo-random generators.

\end{abstract}
\bigskip

 \section{Introduction}
 The standard model of telecommunication is as follows. There is a set  $\Smes = \{u\}$ that consists of  $M$ messages $u$. Every message has length $k=\log_2 M$ bit and can be selected with equal probability for transmission via a channel. To transmit a message  $u$ we need to transmit a packet of $k$ bits using the channel  $$1/C\log_2 M$$ times, where the channel capacity is $C$ bits per channel use. The transmission of long enough messages can be organized in a way that they will be delivered correct almost for sure. 
 
 \medskip
 In  1989 Ahlswede and Dueck \cite{AD} proposed to consider another model called  \emph{message identification}. This model assumes that the receiver expects just one message $u$, which is interesting for him. The transmitter does not know which message is interesting for the receiver.  After receiving a packet, the receiver should decide whether the expected message $u$ was transmitted or not.  Decision errors of the first and second type are allowed.  Ahlswede and Dueck \cite{AD} showed, that for arbitrarily small probabilities of errors in decision making, there exists an identification procedure which for a sufficiently long message requires using the channel
  $$1/C\log_2 \log_2 M$$ times only. It is said that such a procedure reaches the identification capacity and is called optimal.  The gain is achieved due to the fact that for identification it is not necessary to transmit the entire message of length $\log_2 M$,  but it is enough to transmit only its short ``tag'' of length $\log_2 \log_2 M$. 
 
 In addition, \cite{AD} shows that, without loss of efficiency, the identification problem over a noisy channel can be divided into the identification problem (over a noiseless channel) and the problem of transmitting a message tag over a noisy channel. In this regard, we will deal with the problem of identification itself, i.e., we assume that the channel is noiseless with capacity $C = 1$. 
 
 \medskip
 \emph{Interest in identification} has been revived with the advent of the concept of the Internet of Things (IoT). Let's assume that each of the huge number $M$ of IoT objects has its own identifier $u\in \Smes$. The central node, in order to establish a connection with the object $u$, sends a tag of the identifier $u$ over the network. With the help of the identification procedure, the object $u$ accepts the call, the other objects reject the call with a high probability. In this example, authentication can significantly reduce the length of the transmitted request. In addition, the identification problem is closely related to the \cite{BB} authentication problem used in practice, as well as other \cite{MK} problems, which also arouses interest in identification.
 
\medskip
Although the possibilities of identification are shown theoretically in \cite{AD}, \emph{the problem of constructing practically realizable identification procedures for messages of finite length} with acceptable complexity remains open. In the works \cite{AD}, \cite{VW}, \cite{MK} it is proposed to build an identification procedure based on error-correcting codes. A constructive Verd\'u-Wei identification procedure based on cascading two Reed-Solomon codes was proposed by \cite{VW} and discussed in \cite{SDF}, \cite{GKSS}, \cite{FTBDLMV} and others. The  Verd\'u-Wei procedure is asymptotically optimal. However, \cite{SDF} shows that the complexity of the  Verd\'u-Wei procedure is very high for practical applications, i.e., for messages of finite length, since it requires working with finite fields of large sizes. The possibility of using Reed-Muller codes for identification was studied in \cite{SFD}.
 
 \medskip
In this paper, we propose an optimal identification procedure based on random linear codes. It is shown that this procedure can be simplified by using pseudo-random sequence generators.

\section{Code based identification}

We will consider a linear $(n,k,d)_q$ code $\cc$ over a finite field $\Fq$, where $q $ is a prime power, of length $n$, dimension $k$, having code distance $d$ in the Hamming metric. Denote the generator matrix of the code by
$$G=(g\up[1],g\up[2],\dots,g\up[n]),$$
where $g\up[i]$ denote the $i$th column of the matrix. Then the codewords are the vectors $c = (c_1,\dots,c_n) = uG$, where $u\in \Fq^k$ are all possible message vectors.

\subsection{Identification procedure}
\emph{Transmission.} In order to transmit the identifier (message) $u\in\{1,\dots,M\}$, the encoder computes the word $c=uG \in \cal C$, then chooses the random index $i\in[1,n]$ uniformly and transmits through the channel the \emph{identifier word}
\begin{equation}
w = w_i = (i,c_i),
\end{equation}
consisting of a randomly chosen index $i$ and the $i$th \emph{tag $c_i$ of the identifier $u$}. Length of the identification word $w$ in bits is
\begin{equation}
\nw = \log n + \log q.
\end{equation}
Here and below, $\log = \log_2$ means logarithm to base 2.

\medskip

\emph{Decoder} receives the word $w = (i,c_i)$ over the noiseless channel and must decide whether the expected message $u'$  was transmitted or not. To do this, the decoder calculates the word $c'=u'G \in \cal C$ and compares its component $c_i'$ with the received tag $c_i$. If $c_i' = c_i$, then the decoder decides that \emph{the expected message was received}, i.e. that $u_i' = u_i$. Otherwise, i.e., if $c_i' \ne c_i'$, $u_i' \ne u_i$ is considered and \emph{message is rejected}.

Probability of type I decoder error, i.e., the probability of rejecting (missing) the expected message, is denoted by $\lambda_1$,
and the probability of an error of the type II, i.e., the probability of accepting a false message is denoted by $\lambda_2$.

\subsection{Analysis of identification procedure}
From the described code-based identification procedure, it immediately follows that in the case of a noiseless channel, the probability of missing is $\lambda_1 = 0$. 

Note that neither the encoder nor the decoder needs \emph{to decode the $\cal C$ code in the usual sense}. It is not even necessary to completely encode the message $u$ into the codeword $c$, it is required to calculate only one symbol $c_i$, i.e., the scalar product $c_i =(u,g\up[i])$.

The probability $\lambda_2$ of false acceptance is estimated as follows. The words $c$ and $c'$ of the $(n,k,d)_q$-code  $\cal C$ differ at least in $d$ positions $i$, but coincide, i.e., $c_i' = c_i$ in $n-d$ positions at most. Since false acceptance occurs when $c_i' = c_i$ matches, we get that for any pair of messages $u'\ne u$
\begin{equation}\label{lambda2}
\lambda_2 \le \frac{n-d}{n} = 1-\frac{d}{n}.
\end{equation}  

\emph{Identification rate} is defined as 
\begin{equation}
\Ri = \frac{\log \log |\Smes|}{\nw} =
 \frac{\log k + \log \log q }{\log n + \log q}.
\end{equation}

The false acceptance probability  $\lambda_2$ can be reduced \cite{FTBDLMV} by sending several, say $\ell$, identification words $w_{i_1}, w_{i_2},\dots,w_{i_\ell}$. The decoder accepts a message if it is accepted for each of the $\ell$ words $w_{i_j}$, $j=1,\dots,\ell$. In this case, the new false acception probability is upper bounded by $$\lambda_2^\ell \le \left(1-\frac{d}{n}\right)^\ell.$$

The identification procedure using the code sequence $\cc$ achieves the identification capacity \cite{AD} and is called \emph{optimal}
if the following  conditions are satisfied with the corresponding increase in the code parameters \cite{VW}:

\begin{equation}\label{q/n}
\frac{\log q}{\log n} \longrightarrow 0,
\end{equation}

\begin{equation}\label{k/n}
\frac{\log k}{\log n} \longrightarrow 1,
\end{equation}
meaning that for a noiseless channel the identification rate tends to 1, $\Ri \longrightarrow 1$, and also
\begin{equation}\label{d/n}
\frac{d}{n} \longrightarrow 1,
\end{equation}
i.e., the probability of false acceptance of the message $\lambda_2 \longrightarrow 0$. 
\begin{definition}
	A sequence of $(n,k,d)_q$-codes $\cc$ satisfying the conditions \eqref{q/n} - \eqref{d/n} is called an optimal identification sequence of codes.   
\end{definition}

Recall that for an integer $q\ge 2$ and $x\in [0,1]$ \emph{the $q$-ary entropy function} $h_q(x)$ is defined as
\begin{equation}
h_q(x) = x\log_q(q-1) - x\log_q(x) - (1-x)\log_q(1-x).
\end{equation}

It was stated in \cite{VW}  that using a $q$-ary $(n, Rn, \delta n)_q$ code $\cc$ that asymptotically as $n\longrightarrow \infty$ reaches the Varshamov-Gilbert bound
\begin{equation}\label{VG}
R = k/n \le 1 - h_q(\delta)
\end{equation}
allows to get an optimal identification procedure. The next section suggests using a random code to build such a code $\cc$.

\section{Using a random code for identification}

Let us construct a random linear $(n,k,d)_q$ code $\cc$ over the field $\F_q$. To do this, we choose a random generator matrix of the code
\begin{equation}\label{G_rand}
G= \left(g_j\up[i]\right)
\end{equation}
choosing elements $g_j\up[i]$ of the matrix $G$ at random, independently and uniformly from the field $\F_q$. It is known that the random code $\cc$ asymptotically reaches the Varshamov-Gilbert bound with high probability.
\begin{theorem} 
	Let $\delta\in [0,1-1/q)$ and $0<\epsilon<1-h_q(\delta)$. Then
	with probability $P=1-q^{-\epsilon n +1}$ a random $(n,k,d)_q$ code $\cc$ has rate
	\begin{equation}\label{R_random}
	R= k/n \ge 1 - h_q(\delta) - \epsilon,
	\end{equation}
	and the relative code distance $d/n$ is not less than $\delta$.
\end{theorem}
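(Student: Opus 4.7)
The plan is to carry out the classical random linear coding argument, treating the rate bound \eqref{R_random} as the choice $k = \lfloor (1 - h_q(\delta) - \epsilon) n \rfloor$, so that the dimension of the code is fixed and only the relative distance has to be controlled probabilistically.

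The key observation is that for any fixed nonzero message $u \in \F_q^k$, the codeword $c = uG$ is uniformly distributed on $\F_q^n$ when the entries of $G$ are chosen independently and uniformly from $\F_q$. This follows column by column: if $u_j \neq 0$ for some $j$, then each coordinate $c_i = \sum_\ell u_\ell g_\ell\up[i]$ contains the uniform term $u_j g_j\up[i]$, and hence $c_i$ is itself uniform on $\F_q$; independence across $i$ comes from the independence of the columns $g\up[i]$ of $G$.

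Next I would bound the probability that a single such uniform codeword falls into the Hamming ball of radius $\lceil \delta n \rceil - 1$ around the origin. Using the standard volume bound $V_q(n, \delta n) \le q^{n h_q(\delta)}$, valid for $\delta \in [0, 1 - 1/q)$, this probability is at most $q^{-n(1 - h_q(\delta))}$. A union bound over the $q^k - 1 < q^k$ nonzero messages then gives
\begin{equation*}
\Pr\bigl(d(\cc) < \delta n\bigr) \;<\; q^{k}\, q^{-n(1 - h_q(\delta))} \;=\; q^{\,k - n(1 - h_q(\delta))}.
\end{equation*}
Substituting $k \le (1 - h_q(\delta) - \epsilon)n$ from the choice above yields a bound of the form $q^{-\epsilon n}$ (up to a harmless factor of $q$ coming from the integer rounding of $k$), which matches the stated probability $1 - q^{-\epsilon n + 1}$.

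I do not expect any serious obstacle: the argument is entirely standard, and the only mildly delicate point is double-checking that the entropy volume bound $V_q(n, \delta n) \le q^{n h_q(\delta)}$ is invoked within its range of validity $\delta < 1 - 1/q$, which is exactly the hypothesis of the theorem. The small slack $q^{+1}$ in the probability simply absorbs the rounding needed to turn the real inequality on $R$ into a valid integer dimension $k$.
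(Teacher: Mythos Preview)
Your argument is correct and is exactly the standard random linear coding proof of the Gilbert--Varshamov bound. The paper itself does not prove this theorem at all but simply cites Guruswami's lecture notes \cite{Gur}; the proof found there is precisely the one you outline (uniformity of $uG$ for nonzero $u$, the entropy bound on the Hamming ball volume for $\delta < 1 - 1/q$, and a union bound over $q^k$ messages), so there is nothing further to compare.
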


The proof can be found, for example, in \cite{Gur}. 

\begin{theorem}\label{Rand_opt}
	The identification procedure based on the random code $\cc$ is asymptotically optimal with probability 1, i.e. almost for sure.
\end{theorem}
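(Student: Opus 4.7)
The goal is to exhibit a sequence of random linear codes whose parameters almost surely satisfy the three optimality conditions \eqref{q/n}, \eqref{k/n}, \eqref{d/n}. The plan is to pick deterministic sequences of parameters $(n_t,q_t,\delta_t,\epsilon_t)$ that make \eqref{q/n} and \eqref{k/n} hold automatically, then use Theorem~2 together with the Borel--Cantelli lemma to secure \eqref{d/n} with probability one.

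\emph{Choice of parameters.} I would take a slowly growing alphabet and a much faster growing length, for instance $q_t = 2^t$ and $n_t = 2^{t^2}$; this immediately gives $\log q_t / \log n_t = 1/t \to 0$, so \eqref{q/n} holds. Set $\delta_t := 1 - 2/q_t$, so $\delta_t < 1 - 1/q_t$ (as required by Theorem~2) and $\delta_t \to 1$. A direct expansion of the $q$-ary entropy near $1 - 1/q$ yields $1 - h_{q_t}(\delta_t) \sim c/(q_t \log_2 q_t)$ for some absolute constant $c > 0$, so I would choose $\epsilon_t := (1 - h_{q_t}(\delta_t))/2$, which is still of order $1/(q_t \log_2 q_t)$. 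Define $R_t := 1 - h_{q_t}(\delta_t) - \epsilon_t$ and $k_t := \lceil R_t n_t \rceil$. Since $\log(1/R_t) = O(\log q_t) = o(\log n_t)$, we get $\log k_t / \log n_t = 1 + \log R_t / \log n_t \to 1$, which is \eqref{k/n}.

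\emph{Applying Theorem~2 and Borel--Cantelli.} For these parameters Theorem~2 produces a random linear $(n_t,k_t,d_t)_{q_t}$-code whose relative distance satisfies $d_t/n_t \ge \delta_t$ except on an event $E_t$ of probability at most $q_t^{-\epsilon_t n_t + 1}$. With the values above, $\epsilon_t n_t$ grows roughly like $2^{t^2}/(t\, 2^t)$, so the failure probabilities decay doubly exponentially and $\sum_t \Pr[E_t] < \infty$. By the Borel--Cantelli lemma, with probability one only finitely many $E_t$ occur; on the complementary event of full probability, $d_t/n_t \ge \delta_t \to 1$ for all sufficiently large $t$, which is \eqref{d/n}. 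Combined with the deterministic conditions \eqref{q/n} and \eqref{k/n}, the code sequence is almost surely an optimal identification sequence.

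\emph{Main obstacle.} The only genuinely delicate point is the quantitative expansion of $1 - h_{q_t}(\delta_t)$: one must keep the Varshamov--Gilbert gap strictly positive while pushing $\delta_t \to 1$, which is what forces $q_t \to \infty$ and dictates the allowable scale of $\epsilon_t$. Once that expansion is pinned down, the rest is routine Borel--Cantelli bookkeeping against the deterministic conditions on $q_t$ and $k_t$.
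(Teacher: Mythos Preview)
Your argument is correct and follows essentially the same strategy as the paper: pick growing parameters $(n,q,\delta,\epsilon)$ so that \eqref{q/n} and \eqref{k/n} hold deterministically, then invoke the Varshamov--Gilbert theorem for random codes to get \eqref{d/n} with high probability. The paper's own proof is much sketchier---it simply observes that the VG bound tends to the Singleton bound as $q\to\infty$ and defers the actual parameter selection to \cite[Theorem~3]{VW}, concluding only that the success probability $P\to 1$. Your explicit choice $q_t=2^t$, $n_t=2^{t^2}$, $\delta_t=1-2/q_t$ and your entropy expansion are a concrete instantiation of what the paper leaves implicit, and your Borel--Cantelli step genuinely strengthens the conclusion from ``probability tending to $1$'' to ``almost surely'', which is what the theorem statement actually claims. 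One small slip: the VG-for-random-codes result you are invoking is Theorem~1 in the paper's numbering, not Theorem~2 (Theorem~2 is the statement you are proving).
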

\begin{proof}
For $q\longrightarrow \infty$ the Varshamov-Gilbert bound \eqref{VG} approaches the Singleton bound $R=1-\delta$. It follows from \eqref{R_random}  that for any $\epsilon >0$  the rate $R$ of a random code of sufficiently large lengths $n$ with probability $P=1-q^{-\epsilon n +1}$ is at least $R\ge 1 - \delta -\epsilon - \gamma(q)$, where $\gamma(q) \longrightarrow 0$ as $q$ grows. Since the parameters of the random code are the same as the parameters of the code in \cite[Theorem~3]{VW}, our proof coincides with the proof in \cite[Theorem~3]{VW}, where the choice of growing parameters of the code sequence that ensures that conditions \eqref{q/n} - \eqref{d/n} are satisfied was given. The parameter $\epsilon$ is chosen so that $\epsilon \longrightarrow 0$, but $P=1-q^{-\epsilon n +1}\longrightarrow 1$, which completes the proof.
\end{proof}	

Theorem~\ref{Rand_opt} shows the asymptotic optimality of an identification procedure based on random codes. However, for practical applications, the behavior of this procedure for finite parameter values is of interest. Let us show by an example that this procedure can also be of interest for finite values of the code parameters.

\medskip
{\bf Example 1}. Let 

$$q = 2^{10} = 1024,$$

$$n = q*2^8 = 262144,$$ 

$$\delta = 1 - 2^{-7},$$

$$\epsilon = 2^{-14}.$$

%

Then by Theorem 1 with probability $$P = 1- 10^{-45}$$ a random $q$-ary $(n,k,d)$ code $\cc$ has dimension $k=340$ and distance, at least $\delta n$. This gives the identification procedure with $$|\Smes| = 2^{3400}$$ messages, with false acceptance probability $$\lambda_2 = 1/128 = 0.0078$$ and identification rate $$\Ri = 0.419.$$

The identification rate is only 42\% of the capacity $C=1$ due to the relatively small values of the parameters. On the other hand, calculations in the  field $\F_{2^{10}}$ are not too complicated. But even at this rate $\Ri$, a word $w$ of length only $\nw = 28$ bits is transmitted over the channel to identify the message, while \emph{transmission} of the message would require to transmit  3400 bits. The payoff for this gain is the non-zero probability of false acceptance $\lambda_2$. As stated, this probability can be reduced to $\lambda_2^{\ell}$ by transmitting $\nw \ell = 28\ell$ bits \cite{FTBDLMV}. E.g., for $\ell =2$, 56 bits will have to be transmitted, but the probability of a false reception will decrease to $0.00006$.

\section{Using a pseudo-random number generator (PRNG)}

With the direct use of a random code, the disadvantage is the need to store the generator matrix by both the encoder and the decoder. This shortcoming can be avoided if each time before the transmission to calculate the column of the generator matrix using a pseudo-random number generator (PRNG).

A PRNG is a finite state machine having $\mu$ $q$-ary memory elements. Given the initial state (seed) $\sigma \in \Fq^\mu$, the generator generates a sequence $s = \rand_n(\sigma)$ of the required length $n$ of field elements. The period of the generated sequence cannot exceed $q^\mu$. We arrive at the following \emph{$(q,k,\ell,\mu)$-PRNG identification} procedure.

\subsection{$(q,k,\ell,\mu)$-PRNG-identification}
\medskip\emph{Transmission.}
 To send the message $u$, we choose a random initial state $\sigma$, generate $\ell$ random column vectors $g\up[1],g\up[2],\dots,g\up[\ell] \in \Fq^k$ of length $k$ one by one using the PRNG   and form the matrix
\begin{equation}\label{G}
G(\sigma)=(g\up[1],g\up[2],\dots,g\up[\ell]).
\end{equation}
The matrix $G(\sigma)$ can be considered as a submatrix consisting of $\ell$ columns of the generator matrix \eqref{G_rand} of the random code.

Compute the vector $\tau\in \Fq^\ell$ consisting of $\ell$ tags $\tau_l = u_l g_l$ of message $u$
\begin{equation}\label{compute_tau}
\tau = uG(\sigma).
\end{equation}
The following identification word is transmitted over the channel
\begin{equation}\label{w}
w = (\sigma, \tau).
\end{equation}
The word $w$ of length $\mu + \ell$ $q$-ary symbols is obtained by concatenating a random PRNG state $\sigma$ and a vector $\tau$ consisting of $\ell$ tags.

\medskip
\emph{Decoding}. The decoder receives the word $w$ transmitted over the noiseless channel and must decide whether the expected message $u'$   was transmitted. The decoder computes the matrix $G(\sigma)$ \eqref{G} using a PRNG with the initial state $\sigma$ received over the channel. Then the  vector of tags of the expected message is calculated
$$\tau' = u'G(\sigma).$$
If $\tau'= \tau$, the decoder accepts the message as the expected one. When $u' = u$, the decoder will definitely accept the message, hence $\lambda_1=0$. If $\tau'\ne \tau$, then the decoder rejects the message. A false acceptance occurs when $u' \ne u$ but $\tau'= \tau$. Denote by $P_2 (u,u')$ the probability of false acceptance for a pair of messages $u,u'$. We want the condition $P_2 (u,u') \le \lambda_2$ to be satisfied for all pairs $u\ne u'$. 

\medskip
The optimality of the method depends on the PRNG parameters. The main question is: what should be the memory $\mu$ to generate sequences of length $\ell k$ with properties close to random.

\medskip
\emph{Rate} of a $(q,k,\ell,\mu)$-PRNG-identification is
\begin{equation}
\Ri = \frac{\log \log |\Smes|}{\nw} =
\frac{\log k + \log \log q }{(\mu + \ell)\log q}.
\end{equation}

\medskip\medskip
\emph{Probability $\lambda_2$ of false message acceptance}.
We will assume that the following condition is satisfied.

\medskip
\begin{condition}  
	The PRNG generates a sequence $s = \rand_n(\sigma)$ of sufficient length $n$ of independent uniformly distributed random variables $s_i\in \Fq$.
\end{condition}

\begin{theorem}\label{T_lambda2}
	 Let the decoder expecting the message $u'$ receive the identification word $w =(\sigma, \tau)$ corresponding to the message $u$. Under Condition 1, for any $u\ne u'$, the probability of false acceptance $P_2 (u,u') = q^{-\ell}$,
	 i.e.,
	 \begin{equation}\label{lambda2_bound}
	 \lambda_2=q^{-\ell}.
	 \end{equation}
	\end{theorem}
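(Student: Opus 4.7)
The plan is to translate false acceptance into a linear-algebraic event and exploit independence across the $\ell$ columns of $G(\sigma)$. First I would set $v = u - u' \in \Fq^k$, which is nonzero by hypothesis. Since $\tau = uG(\sigma)$ and $\tau' = u'G(\sigma)$, false acceptance (the event $\tau = \tau'$) is equivalent to $vG(\sigma) = 0$, i.e.\ all $\ell$ inner products $\langle v, g\up[j]\rangle$ vanish for $j=1,\dots,\ell$.

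Next I would invoke Condition 1 to argue that the $\ell k$ entries of $G(\sigma)$ obtained from $\rand_n(\sigma)$ are i.i.d.\ uniform on $\Fq$. In particular, the columns $g\up[1],\dots,g\up[\ell]$ are mutually independent, and therefore so are the scalar random variables $\langle v,g\up[j]\rangle$. The key computational step is the claim that for any fixed nonzero $v\in\Fq^k$ and a uniformly random $g\in\Fq^k$, the inner product $\langle v,g\rangle$ is uniformly distributed on $\Fq$. This is the standard fact that a nontrivial $\Fq$-linear form in independent uniform variables is uniform: picking an index $i$ with $v_i\ne 0$, conditioning on the remaining coordinates of $g$, the value $v_i g_i$ alone ranges uniformly over $\Fq$, so the conditional (and hence unconditional) distribution of the sum is uniform.

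Given uniformity, $\Pr[\langle v,g\up[j]\rangle = 0] = 1/q$ for each $j$, and by independence of the columns I would multiply to get
\begin{equation*}
P_2(u,u') = \Pr[vG(\sigma)=0] = \prod_{j=1}^{\ell}\Pr[\langle v,g\up[j]\rangle = 0] = q^{-\ell},
\end{equation*}
which establishes \eqref{lambda2_bound} uniformly in the pair $u\ne u'$.

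I do not expect a serious obstacle: once Condition 1 is accepted, everything reduces to the one-line uniformity lemma for nontrivial linear forms over $\Fq$. The only thing to be careful about is to state explicitly that Condition 1 must guarantee independence across the full $\ell k$ symbols of $G(\sigma)$ consumed from the PRNG, not merely within a single column, since the argument uses column-wise independence crucially.
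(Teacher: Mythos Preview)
Your proposal is correct and follows essentially the same route as the paper: reduce false acceptance to $vG(\sigma)=0$ for $v=u-u'\ne 0$, split into the $\ell$ column-wise conditions $vg\up[j]=0$, and use Condition~1 to conclude each occurs independently with probability $1/q$. You spell out the uniformity of a nontrivial linear form more explicitly than the paper does, but the argument is identical in substance.
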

\begin{proof}
	False acceptance of the message $u$ instead of the expected message $u'\ne u$ occurs if and only if the tags of these messages match, i.e. when equality holds
	$$u'G(\sigma) = uG(\sigma),$$
	which is equivalent to the equality
	\begin{equation}\label{vG}
	vG(\sigma)=0,
	\end{equation}
	where $v=u'-u \ne 0$, $v\in\Fq^k$. By Condition 1, the matrix $G(\sigma)$ consists of elements of the field $\Fq$ chosen independently and uniformly. The equality \eqref{vG} holds if and only if each column $g\up[i]$ of the matrix $G(\sigma)$ satisfies
	\begin{equation}\label{eq_vg}
	vg\up[i]=0 \quad i=1,\dots,\ell,
	\end{equation}
	what happens with probability $q^{-\ell}$.
\end{proof}


\subsection{Requirements for the generator PRNG}
One of the standard tools for constructing pseudo-random sequences are linear feedback shift registers (LFSRs). LFSRs are easy to implement, fast to run, and easy to analyze. In this section, we explore the possibility of using LFSR as a PRNG for identification.

\medskip
\emph{Linear Feedback Shift Register (LFSR)} is shown in Fig. 1. The register consists of $\mu$ memory cells that store the elements of the  field $\Fq$, where $\mu$ is the length of the register. The initial state of the register is denoted by
\begin{equation}
\sigma = (s_1,s_2,\dots,s_\mu).
\end{equation}
On the $i$th cycle, $i=1,2,\dots$, a new element $s_{\mu+i}$
\begin{equation}\label{prng}
s_{\mu+i} = -\sum\limits_{j=1}^{\mu} a_j s_{i+\mu-j}
\end{equation}
 of the generated sequence is calculated using the feedback,
the content of the register is shifted to the left, generating an output element $s_i$, and a new element $s_{\mu+i}$ is written to the last cell of the register that has become free. The polynomial $a(x) = 1+a_1 x + a_2 x^2 +\dots + a_\mu x^\mu$, $a_0 = 1$, over the field $\Fq$ is called the feedback polynomial. An LFSR of length $\mu$ with feedback polynomial $a(x)$ is denoted by $\mu,a(x)$.

%

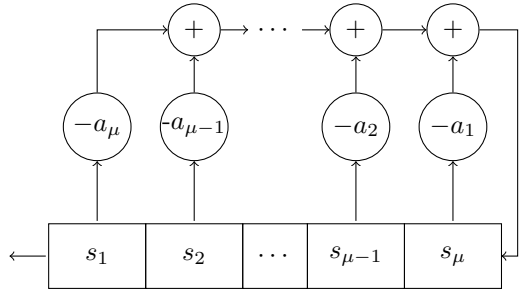
\begin{figure}[ht]
	\begin{center}
		\begin{tikzpicture} [scale=0.43,coeff/.style={circle,draw,minimum size=0.9cm, inner sep=0pt},coeff1/.style={circle,draw,minimum size=0.9cm, inner sep=0pt},coeff3/.style={circle,draw,minimum size=0.2cm}]
		
		\foreach \x in {1,4,9,12}
		{
			\draw(\x,1) +(-1.5,-1) rectangle +(1.5,1);
			
		}
		\draw (5.5,0) rectangle +(2,2);
		
		\draw (1,1) 	node 	(rect1)	{$ s_{1}$};
		\draw (4,1) 	node 	(rect2)	{$ s_{2}$};
		\draw (9,1) 	node 	(rect3)	{$ s_{\mu-1}$};
		\draw (12,1) 	node 	(rect4)	{$ s_{\mu}$};
		\draw (6.5,1) 	node	(rect5)		{$ \cdots$};	
		\draw (6.5,8) 	node	(rect6)		{$ \cdots$};
		
		\node (p1) at (1,1.8) {};
		\node (p2) at (4,1.8) {};
		\node (p3) at (9,1.8) {};
		\node (p4) at (12,1.8) {};
		
		\node(p5p) at (-0.3,1){};
		\node (p5) at (-2,1) {};
		\node (p6p) at (13.3,1){};
		\node (p6) at (15,1){};
		
		
		\node[coeff1] (coeff21) at (1,5) {$-a_\mu$};
		\node[coeff1] (coeff22) at (4,5) {$\textrm{-}a_{\mu-1}$};
		\node[coeff1] (coeff23) at (9,5) {$-a_2$};
		\node[coeff1] (coeff24) at (12,5) {$-a_1$};
		
		\node[coeff3] (coeff32) at (4,8) {$+$};
		\node[coeff3] (coeff33) at (9,8) {$+$};
		\node[coeff3] (coeff34) at (12,8) {$+$};

		\draw [->]	(p1)--(coeff21); 	
		\draw [->]	(p2)--(coeff22); 
		\draw [->]	(p3)--(coeff23); 
		\draw [->]	(p4)--(coeff24); 
		
		\draw [->](p5p)--(p5);		
		\draw [->] (coeff32.east)--(rect6.west);
		\draw [->] (rect6.east)--(coeff33.west);
		\draw [->](coeff34.east)--(14,8)--(14,1)--(13.5,1);
		
		
		\draw [->] (coeff21) --(1,8)--(coeff32);
		\draw [->] (coeff22) --(coeff32);
		\draw [->] (coeff23) --(coeff33);
		\draw [->] (coeff24) --(coeff34);
		\draw [->] (coeff33)--(coeff34);	
		\end{tikzpicture}
		\caption{Linear Feedback Shift Register  $\mu,a(x)$}
		\label{fig:shift_reg}
	\end{center}
\end{figure}

\begin{theorem}\label{th_mu}
 To satisfy \eqref{lambda2_bound} for $(q,k,\ell,\mu)$-PRNG identification, the length $\mu$ of the LFSR linear feedback register must be at least $k$, i.e., $\mu\ge k$.
\end{theorem}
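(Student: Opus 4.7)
The plan is to argue by contradiction: suppose $\mu < k$, and construct a non-zero difference vector $v = u'-u \in \Fq^k$ for which the false-acceptance probability $P_2(u,u')$ strictly exceeds $q^{-\ell}$, which will contradict \eqref{lambda2_bound} in Theorem~\ref{T_lambda2}.

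The key observation is that the LFSR is a linear device: by \eqref{prng}, every generated symbol $s_j$ is a fixed $\Fq$-linear functional of the seed $\sigma \in \Fq^\mu$. Consequently the first column $g\up[1] = (s_1,\dots,s_k)^T$ has the form $A\sigma$ for some $k \times \mu$ matrix $A$ over $\Fq$. As $\sigma$ ranges over $\Fq^\mu$, the column $g\up[1]$ is therefore confined to a subspace $V \subseteq \Fq^k$ of dimension at most $\mu < k$. Thus $V^\perp$ is non-trivial and I can pick some $v \in V^\perp \setminus \{0\}$.

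For this $v$, the inner product $v \cdot g\up[1] = 0$ holds for \emph{every} value of $\sigma$, i.e.\ with probability one. The remaining conditions $v \cdot g\up[i] = 0$ for $i = 2,\dots,\ell$ are each a single linear equation in $\sigma$, so together they impose at most $\ell - 1$ linear constraints on $\sigma \in \Fq^\mu$. Since $\sigma$ is uniform, the fraction of $\sigma$ satisfying all of them is at least $q^{-(\ell-1)}$, and therefore
\[
P_2(u, u+v) \;=\; \Pr_{\sigma}\!\left[\, v G(\sigma) = 0 \,\right] \;\ge\; q^{-(\ell-1)} \;>\; q^{-\ell},
\]
contradicting \eqref{lambda2_bound}. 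Hence $\mu \ge k$.

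The main step is the first one: recognising that linearity of the recurrence \eqref{prng} forces $g\up[1]$ into a subspace of $\Fq^k$ of dimension at most $\mu$; after that, the conclusion is a dimension count. A minor point, which does not obstruct the argument, is that the image dimension may be strictly less than $\mu$ for degenerate feedback polynomials, but this only enlarges $V^\perp$ and strengthens the bound on $P_2$.
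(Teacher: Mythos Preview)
Your argument is correct, but it follows a different path from the paper's. The paper constructs the difference vector explicitly from the feedback polynomial, taking $v=(a_\mu,a_{\mu-1},\dots,a_1,1,0,\dots,0)\in\Fq^k$. Because every length-$(\mu+1)$ window of an LFSR output obeys the recurrence \eqref{prng}, this $v$ annihilates \emph{every} column $g\up[i]$ of $G(\sigma)$ simultaneously, for \emph{every} seed $\sigma$; hence the paper obtains the stronger conclusion $P_2(u,u')=1$ (so $\lambda_2=1$), not merely $P_2>q^{-\ell}$. Your route is more abstract: you use only that the map $\sigma\mapsto g\up[1]$ is linear from a $\mu$-dimensional space into $\Fq^k$, pick $v$ in the (necessarily non-trivial) annihilator of its image, and then handle the remaining $\ell-1$ columns by a codimension count. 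What the paper's approach buys is an explicit bad message pair and the sharpest possible failure ($\lambda_2=1$); what your approach buys is generality --- your proof works verbatim for \emph{any} $\Fq$-linear PRNG with seed space of dimension $\mu<k$, not just LFSRs, which is exactly the moral the paper draws in the paragraph following the theorem.
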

\begin{proof}
Assume the opposite, i.e., that $\mu < k$. For any sequence $s$ generated by the LFSR $\mu,a(x)$ with an arbitrary initial state $\sigma$, and for any $i\ge \mu+1$, holds the equality
\begin{equation}\label{recurs}
\sum\limits_{j=0}^{\mu} a_j s_{i-j}=0.
\end{equation}
As in the proof of Theorem~\ref{T_lambda2}, we denote the difference of two messages by the vector $v=u-u'\ne 0$ of length $k$. Since the messages can be arbitrary, the vector $v\ne 0$ can be chosen arbitrarily. We construct the vector $v$ as follows: take the vector
\begin{equation}
(a_\mu, a_{\mu-1}, \dots, a_1, 1)
\end{equation}
of length $\mu +1\le k$ and, if necessary, append it with zeros up to the length of $k$. Since the columns of the matrix $G$ in \eqref{G} are segments of the sequence $s$, using \eqref{recurs} we get that the conditions \eqref{eq_vg} and \eqref{vG} of false message acceptance are satisfied. Since for the chosen message pair $u,u'$ a false positive occurs for any choice of $\sigma$, the false positive probability $\lambda_2 =1$ and \eqref{lambda2_bound} does not hold.
\end{proof}

Theorem~\ref{th_mu} shows that it is inexpedient to use LFSR as a PRNG in the considered scheme. Indeed, the length of the LFSR must be at least $k$, but then the identification word \eqref{w} would have to be transmitted with a length of more than $k$ field elements. This does not make sense, since the transmission of the entire message over a noiseless channel requires only $k$ field elements to be transmitted.

This situation arose due to the fact that both the pseudo-random generator \eqref{prng} and the procedure for calculating the tag \eqref{compute_tau} are linear. The   difficulty can be avoided by using a non-linear pseudo-random generator or a non-linear tag calculation function. Although an LFSR cannot be directly used as a PRNG, it is known that combining multiple LFSR generators allows the construction of non-linear PRNGs. The question of the choice of PRNG and the parameters of PRNG identification remains open for further research.

\section{Conclusion}
It is known in coding theory that linear random codes are ``good'' with high probability, that is, they reach the Varshamov-Gilbert bound. However, in practice, random codes are not used to transmit messages. The reason is that efficient decoding procedures are not known for these codes. The specifics of the task of identification based on codes is that it is not required to decode the codes!

This allowed us to propose an identification procedure based on random linear codes. The procedure is optimal, i.e. asymptotically with the growth of parameters allows to achieve the identification capacity of the channel. The given example shows that the proposed procedure with finite values of the code parameters can be interesting for practical applications.

In addition, it is shown that this procedure can be simplified using pseudo-random sequence generators, but it is not possible to directly use a linear feedback shift register (LFSR) for this. Either a non-linear generator is needed, possibly obtained by combining several LFSRs, or the scalar product used to calculate a tag must be replaced by a non-linear function.

The work leaves open a number of questions, the main of which is the choice of a nonlinear generator with small memory and high linear complexity. This may serve as a topic for further research.

\hfil \hbox to 0.3\textwidth{\hrulefill} \smallskip

\section*{Acknowledgement}
This work was supported by the German Federal Ministry of Education and Research (BMBF), Grant 16KIS1005.
Christian Deppe acknowledge the financial support by the Federal Ministry of Education and Research of Germany in the programme of ''Souver\"an. Digital. Vernetzt''. Joint project 6G-life, project identification number: 16KISK002.

%
%
%
%

\begin{flushright}
{ \textit Received by the editor of Proc. of MIPT 03.05.2022.}
\end{flushright} 


\end{document}